\providecommand{\U}[1]{\protect\rule{.1in}{.1in}}
\newtheorem{algorithm}{Algorithm}
\newtheorem{theorem}{Theorem}
\newtheorem{remark}{Remark}
\newenvironment{proof}{Proof.}{$\Box$}
\begin{document}
%
\runningheads{J.~Mandel, E.~Bergou, S.~Gratton}%
{4DVAR by ensemble Kalman smoother}
\title{4DVAR by ensemble Kalman smoother}
\author{Jan Mandel,\affil{a}\corrauth\
Elhoucine Bergou,\affil{b} and
Serge Gratton\affil{b}}
\address{\affilnum{a}
University of Colorado Denver, Denver, CO, USA\\
\affilnum{b}INP-ENSEEIHT and CERFACS, Toulouse, France}
\corraddr{Department of Mathematical and Statistical Sciences, \\
University of Colorado Denver, Denver, CO 80217-3364, USA.\\
E-mail: jan.mandel@gmail.com}
\begin{abstract}
In this paper, we propose to use the ensemble Kalman smoother (EnKS) as linear least squares solver in the
Gauss-Newton method for the large nonlinear least squares in incremental 4DVAR.
The ensemble approach is naturally parallel over the ensemble members and no tangent or adjoint operators
are needed. Further, adding a regularization term results in replacing the Gauss-Newton method, which may diverge, by
the Levenberg-Marquardt method, which is known to be convergent.
The regularization is implemented efficiently as an additional observation in the EnKS.
\end{abstract}\keywords
{Variational data assimilation; Incremental 4DVAR; Ensemble Kalman Smoother; Tikhonov regularization; Levenberg-Marquardt method}
\maketitle

\section{Introduction}

4DVAR is a dominant data assimilation method used in weather forecasting
centers worldwide. 4DVAR attempts to reconcile model and data variationally,
by solving a very large weighted nonlinear least squares problem. The unknown
is a vector of system states over discrete points in time, when the data are
given. The objective function minimized is the sum of the squares of the
differences of the initial state from a known background state at the initial
time and the differences of the values of observation operator and the data at
every given time point. In the weak-constraint 4DVAR
\citep{Tremolet-2007-ME4}%
, considered here, the model error is accounted for by allowing the ending and
starting state of the model at every given time point to be different, and
adding to the objective function also the sums of the squares of those
differences. The sums of the squares are weighted by the inverses of the
appropriate error covariance matrices, and much of the work in the
applications of 4DVAR goes into modeling those covariance matrices.

In the incremental approach
\citep{Courtier-1994-SOI}%
, the nonlinear least squares problem is solved iteratively by using a
succession of linear least square solutions. The major cost in 4DVAR
iterations is in evaluating the model, tangent and adjoint operators, and
solving large linear least squares. A significant software development effort
is needed for the additional code to implement the tangent and adjoint
operators to the model and the observation operators. Straightforward
linearization, called the incremental approach
\citep{Courtier-1994-SOI}%
, leads to the Gauss-Newton method for nonlinear least squares
\citep{Bell-1994-IKS,Tshimanga-2008-LPA}%
. However, Gauss-Newton iterations may not converge, not even locally.
Finally, while the evaluation of the model operator is typically parallelized
on modern computer architectures, there is a need to further parallelize the
4DVAR process itself.

The Kalman filter is a sequential Bayesian estimation of the gaussian state of
a linear system at a sequence of discrete time points. At each of the time
points, the use of the Bayes theorem results in an update of the state,
represented by its mean and covariance. The Kalman smoother simply considers
all states at all time points from the beginning to be a large composite
state. Consequently, the Kalman smoother is obtained from the Kalman filter by
simply applying the same update as in the filter to the past states as well.
However, historically, the focus was on efficient short recursions
\citep{Rauch-1965-MLE,Strang-1997-LAG}%
, similar the sequential Kalman filter.

It is well known that weak constraint 4DVAR is equivalent to the Kalman
smoother in the linear case. To apply the Kalman smoother in the nonlinear
case, the problem needs to be linearized, leading to variants of the extended
Kalman filter and the Gauss-Newton method. Use of the Kalman smoother to solve
the linear least squares in the Gauss-Newton method is known as the iterated
Kalman smoother, and considerable improvements can be obtained against running
the Kalman smoother only once
\citep{Bell-1994-IKS,Fisher-2005-EKS}%
.

The Kalman filter and smoother require maintaining the covariance of the
state, which is not feasible for large systems, such as in numerical weather
prediction. Hence, the ensemble Kalman filter (EnKF) and ensemble Kalman
smoother (EnKS)
\citep{Evensen-2009-DAE}
use a Monte-Carlo approach for large systems, representing the state by an
ensemble of simulations, and estimating the state covariance from the
ensemble. The implementation of the EnKS in \cite{Stroud-2010-EKF} uses the
adjoint model with the short recursions as in the KS. However, the
implementations in \cite{Khare-2008-IAE,Evensen-2009-DAE} do not depend on the
adjoint model and simply apply EnKF algorithms to the composite state over
multiple time points. We use the latter approach here.

The EnKF has become a competitive method for data assimilation. Consequently,
combinations of ensemble and variational approaches have become of
considerable recent interest. Estimating the background covariance for 4DVAR
from an ensemble was one of the first connections
\citep{Hamill-2000-HEK-x}%
, and it is now standard and became operational
\citep{Wang-2010-IEC}%
. Gradient methods in the span of the ensemble for one analysis cycle (i.e.,
3DVAR) include \cite{Zupanski-2005-MLE}, \cite{Sakov-2012-IES} (with square
root EnKF as a linear solver in Newton method), and \cite{Bocquet-2012-CII},
who added regularization and use LETKF-like approach to minimize the nonlinear
cost function over linear combinations of the ensemble.
\cite{Liu-2008-EFV,Liu-2009-EFV} combine ensembles with (strong constraint)
4DVAR and minimize in the observation space. Their method, called Ens4DVAR,
does not need tangent or adjoint operators also. \cite{Zhang-2009-CEK} use a
two-way connection between EnKF and 4DVAR, to obtain the covariance for 4DVAR,
and 4DVAR to feed the mean analysis into EnKF. However, ensemble methods for
the solution of the 4DVAR \emph{nonlinear} least squares problem itself or for
the weak constraint 4DVAR do not seem to have been developed before.

In this paper, we propose to use the ensemble Kalman smoother (EnKS) as linear
least squares solver in 4DVAR. The ensemble approach is naturally parallel
over the ensemble members. The rest of the computational work is relatively
cheap compared to the ensemble of simulations, and parallel dense linear
algebra libraries can be used. The proposed approach uses finite differences
from the ensemble, and no tangent or adjoint operators are needed. To
stabilize the method and assure convergence, a Tikhonov regularization term is
added to the linear least squares, and the Gauss-Newton method becomes the
Levelberg-Marquardt method. The Tiknonov regularization is implemented within
EnKS as a computationally cheap additional observation
\citep{Johns-2008-TEK}%
. We call the resulting method EnKS-4DVAR.

The paper is organized as follows. In Section \ref{sec:4dvar}, we review the
formulation of 4DVAR. The EnKS for the incremental linearized squares problem
is reviewed in Section \ref{sec:EnKS}. The new method without tangent
operators is introduced in Section \ref{sec:FD}. The modifications for the
regularization and the Levenberg-Marquardt method are presented in Section
\ref{sec:reg}. Section \ref{sec:comp} contains the results of our
computational experiments, and Section \ref{sec:conclusion} is the conclusion.

\section{Incremental 4DVAR and the Gauss-Newton method}

\label{sec:4dvar}We want to determine $x_{0},\ldots,x_{k}$, where $x_{i}$ is
the state at time $i$, from the background state,%
\[
x_{0}\approx x_{\mathrm{b}},
\]
the model,%
\[
x_{i}\approx\mathcal{M}_{i}\left(  x_{i-1}\right)  ,
\]
and the observations%
\[
\mathcal{H}_{i}\left(  x_{i}\right)  \approx y_{i},
\]
where $\mathcal{M}_{i}$ is the model operator, and $\mathcal{H}_{i}$ is the
observation operator. Quantifying the uncertainty by covariances, with
$x_{0}\approx x_{\mathrm{b}}$ taken as $\left(  x_{0}-x_{\mathrm{b}}\right)
^{\mathrm{T}}\boldsymbol{B}^{-1}\left(  x_{0}-x_{\mathrm{b}}\right)  \approx
0$, etc., we get the nonlinear least squares problem%
\begin{align}
&  \left\Vert x_{0}-x_{\mathrm{b}}\right\Vert _{\boldsymbol{B}^{-1}}^{2}+%
{\displaystyle\sum\limits_{i=1}^{k}}
\left\Vert x_{i}-\mathcal{M}_{i}\left(  x_{i-1}\right)  \right\Vert
_{\boldsymbol{Q}_{i}^{-1}}^{2}\nonumber\\
&  +%
{\displaystyle\sum\limits_{i=1}^{k}}
\left\Vert y_{i}-\mathcal{H}_{i}\left(  x_{i}\right)  \right\Vert
_{\boldsymbol{R}_{i}^{-1}}^{2}\rightarrow\min_{x_{0:k}}, \label{eq:4dvar}%
\end{align}
called weak-constraint 4DVAR \citep{Tremolet-2007-ME4}. Originally in 4DVAR,
$x_{i}=$ $\mathcal{M}_{i}\left(  x_{i-1}\right)  $; the weak constraint
$x_{i}\approx\mathcal{M}_{i}\left(  x_{i-1}\right)  $ accounts for model error.

The least squares problem (\ref{eq:4dvar}) is solved iteratively by
linearization,%
\begin{align*}
\mathcal{M}_{i}\left(  x_{i-1}+\delta x_{i-1}\right)   &  \approx
\mathcal{M}_{i}\left(  x_{i-1}\right)  +\mathcal{M}_{i}^{\prime}\left(
x_{i-1}\right)  \delta x_{i-1},\\
\mathcal{H}_{i}\left(  x_{i}+\delta x_{i}\right)   &  \approx\mathcal{H}%
_{i}\left(  x_{i}\right)  +\mathcal{H}_{i}^{\prime}\left(  x_{i}\right)
\delta x_{i} .
\end{align*}
For $k$ vectors $u_{i}$, $i=1\dots k$, denote the composite vector
\[
u_{0:k}=\left[
\begin{array}
[c]{c}%
u_{0}\\
\vdots\\
u_{k}%
\end{array}
\right]  .
\]

In each iteration $x_{0:k}\leftarrow x_{0:k}+\delta x_{0:k}$, one solves the
auxiliary linear least squares problem for the increments $\delta x_{0:k},$%
\begin{align}
&  \left\Vert x_{0}+\delta x_{0}-x_{\mathrm{b}}\right\Vert _{\boldsymbol{B}%
^{-1}}^{2}\nonumber\\
&  +%
{\displaystyle\sum\limits_{i=1}^{k}}
\left\Vert x_{i}+\delta x_{i}-\mathcal{M}_{i}\left(  x_{i-1}\right)
-\mathcal{M}_{i}^{\prime}\left(  x_{i-1}\right)  \delta x_{i-1}\right\Vert
_{\boldsymbol{Q}_{i}^{-1}}^{2}\nonumber\\
&  +%
{\displaystyle\sum\limits_{i=1}^{k}}
\left\Vert y_{i}-\mathcal{H}_{i}\left(  x_{i}\right)  -\mathcal{H}_{i}%
^{\prime}\left(  x_{i}\right)  \delta x_{i}\right\Vert _{\boldsymbol{R}%
_{i}^{-1}}^{2}\rightarrow\min_{\delta x_{0:k}}. \label{eq:4dvar-incr}%
\end{align}
This is the Gauss-Newton method \citep{Bell-1994-IKS,Tshimanga-2008-LPA} for
nonlinear squares, known in 4DVAR as the incremental approach \citep{Courtier-1994-SOI}.

Denote
\begin{align}
z_{0:k}  &  =\delta x_{0:k},\quad z_{\mathrm{b}}=x_{\mathrm{b}}-x_{0}%
,\nonumber\\
m_{i}  &  =\mathcal{M}_{i}\left(  x_{i-1}\right)  -x_{i},\quad d_{i}%
=y_{i}-\mathcal{H}_{i}\left(  x_{i}\right)  ,\label{eq:def-di}\\
\boldsymbol{M}_{i}  &  =\mathcal{M}_{i}^{\prime}\left(  x_{i-1}\right)
,\quad\boldsymbol{H}_{i}=\mathcal{H}_{i}^{\prime}\left(  x_{i}\right)
\nonumber
\end{align}
and write the auxiliary linear least squares problem (\ref{eq:4dvar-incr}) as%
\begin{align}
&  \left\Vert z_{0}-z_{\mathrm{b}}\right\Vert _{\boldsymbol{B}^{-1}}^{2}+%
{\displaystyle\sum\limits_{i=1}^{k}}
\left\Vert z_{i}-\boldsymbol{M}_{i}z_{i-1}-m_{i}\right\Vert _{\boldsymbol{Q}%
_{i}^{-1}}^{2}\nonumber\\
&  +%
{\displaystyle\sum\limits_{i=1}^{k}}
\left\Vert d_{i}-\boldsymbol{H}_{i}z_{i}\right\Vert _{\boldsymbol{R}_{i}^{-1}%
}^{2}\rightarrow\min_{z_{0:k}} \label{eq:4dvar-lin}%
\end{align}
\qquad

The function minimized in (\ref{eq:4dvar-lin}) is exactly the same as the one
minimized in the Kalman smoother. The Gauss-Newton method with the Kalman
smoother as the linear least squares solver is known as the iterated Kalman
smoother, and considerable improvements can be obtained against running the
Kalman smoother, applied to the linearized problem, only once
\citep{Bell-1994-IKS,Fisher-2005-EKS}%
.

\section{Ensemble Kalman Filter and Smoother}

\label{sec:EnKS}

We present the EnKF and EnKS algorithms, essentially following
\cite{Evensen-2009-DAE}, in a form needed to state our theorems. We start with
a formulation of the EnKF in a notation suitable for extension to EnKS. The
notation $v^{\ell}\sim N\left(  m,\boldsymbol{A}\right)  $ means that
$v^{\ell}$ is sampled from $N\left(  m,\boldsymbol{A}\right)  $ independently
of anything else. The ensemble of states of the linearized model at time $i$,
conditioned on data up to time $j$ (that is, with the data up to time $j$
already ingested), is denoted by
\[
Z_{i|j}^{N}=\left[  z_{i|j}^{1},\ldots,z_{i|j}^{N}\right]  =\left[
z_{i|j}^{\ell}\right]  ,
\]
where the ensemble member index $\ell$ always runs over $\ell=1,\ldots,N$, and
similarly for other ensembles.

\begin{algorithm}
[EnKF]1. Initialize\textbf{ }%
\begin{equation}
z_{0|0}^{\ell}\sim N\left(  z_{\mathrm{b}},\boldsymbol{B}\right)  ,\quad
\ell=1,\ldots,N. \label{eq:enks-init}%
\end{equation}

2. For $i=1,\ldots,k$, advance in time%
\begin{equation}
z_{i|i-1}^{\ell}=\boldsymbol{M}_{i}z_{i-1|i-1}^{\ell}+m_{i}+v_{i}^{\ell},\quad
v_{i}^{\ell}\sim N\left(  0,\boldsymbol{Q}_{i}\right)  , \label{eq:advance}%
\end{equation}
followed by the analysis step%
\begin{align}
z_{i|i}^{\ell}=  &  z_{i|i-1}^{\ell}-\boldsymbol{P}_{i}^{N}\boldsymbol{H}%
_{i}^{\mathrm{T}}(\boldsymbol{H}_{i}\boldsymbol{P}_{i}^{N}\boldsymbol{H}%
_{i}^{\mathrm{T}}+\boldsymbol{R}_{i})^{-1}\nonumber\\
&  \cdot(\boldsymbol{H}_{i}z_{i|i-1}^{\ell}-d_{i}-w_{i}^{\ell}),\quad
w_{i}^{\ell}\sim N\left(  0,\boldsymbol{R}_{i}\right)  , \label{eq:analysis}%
\end{align}
where
\begin{equation}
\boldsymbol{P}_{i}^{N}=\frac{1}{N-1}(Z_{i|i-1}^{N}-\overline{z}_{i|i-1}%
^{N}\boldsymbol{1}^{\mathrm{T}})(Z_{i|i-1}^{N}-\overline{z}_{i|i-1}%
^{N}\boldsymbol{1}^{\mathrm{T}})^{\mathrm{T}} \label{eq:cov}%
\end{equation}
is the sample covariance,
\[
\overline{z}_{i|i-1}^{N}=Z_{i|i-1}^{N}\frac{\boldsymbol{1}}{N}%
\]
is the sample mean, and $\boldsymbol{1}$ is the vector of all ones size
$N\times1$.
\end{algorithm}

\begin{remark}
From (\ref{eq:cov}),%
\begin{equation}
\boldsymbol{P}_{i}^{N}=Z_{i|i-1}^{N}\frac{1}{N-1}\left(  \boldsymbol{I}%
-\frac{\boldsymbol{11}^{\mathrm{T}}}{N}\right)  \left(  \boldsymbol{I}%
-\frac{\boldsymbol{11}^{\mathrm{T}}}{N}\right)  Z_{i|i-1}^{N\mathrm{T}},
\label{eq:cov-fact}%
\end{equation}
and
\begin{equation}
\boldsymbol{H}_{i}\boldsymbol{P}_{i}^{N}\boldsymbol{H}_{i}^{\mathrm{T}}%
=\frac{1}{N-1}A_{i}^{N}A_{i}^{N\mathrm{T}}. \label{eq:obs-cov-fact}%
\end{equation}

Hence, the analysis ensemble $Z_{i|i}^{N}$ consists of linear combination of
the forecast ensemble, which can be written as multiplying the ensemble by a
transformation matrix $\boldsymbol{T}_{i}^{N}$,%
\begin{equation}
Z_{i|i}^{N}=Z_{i|i-1}^{N}\boldsymbol{T}_{i}^{N},\quad\boldsymbol{T}_{i}^{N}%
\in\mathbb{R}^{N\times N}, \label{eq:enkf-transf}%
\end{equation}
where%
\begin{align}
\boldsymbol{T}_{i}^{N}  &  =I-\frac{1}{N-1}\left(  \boldsymbol{I}%
-\frac{\boldsymbol{11}^{\mathrm{T}}}{N}\right)  A_{i}^{N\mathrm{T}%
}\label{eq:EnKF-T}\\
&  \cdot\left(  \frac{1}{N-1}A_{i}^{N}A_{i}^{N\mathrm{T}}+\boldsymbol{R}%
_{i}\right)  ^{-1}\nonumber\\
&  \cdot\left[  \boldsymbol{H}_{i}z_{i|i-1}^{\ell}-d_{i}+w_{i}^{\ell}\right]
_{\ell=1,N}, \label{eq:incremental-Z-fd}%
\end{align}
with $w_{i}^{\ell}\sim N\left(  d_{i},\boldsymbol{R}_{i}\right)  $, and
\begin{align}
A_{i}^{N}  &  =\boldsymbol{H}_{i}Z_{i:i-1}^{N}\left(  I-\frac{\boldsymbol{11}%
^{\mathrm{T}}}{N}\right)  =\left[  a_{i}^{1},\ldots,a_{i}^{N}\right]
,\nonumber\\
a_{i}^{\ell}  &  =\boldsymbol{H}_{i}z_{i|i-1}^{\ell}-\frac{1}{N}%
{\displaystyle\sum\limits_{j=1}^{N}}
\boldsymbol{H}_{i}z_{i|i-1}^{j}. \label{eq:obs-ens-member}%
\end{align}

\end{remark}

\begin{remark}
The matrix formula in the analysis step (\ref{eq:analysis}) is not efficient
when the dimension of the data space is large. Using (\ref{eq:obs-cov-fact})
and the Sherman-Morrisson-Woodbury formula
\citep{Hager-1989-UIM}%
, we transform the inverse in (\ref{eq:analysis}) into
\begin{align}
&  \left(  \boldsymbol{H}_{i}\boldsymbol{P}_{i}^{N}\boldsymbol{H}%
_{i}^{\mathrm{T}}+\boldsymbol{R}_{i}\right)  ^{-1}=\boldsymbol{R}_{i}%
^{-1}\nonumber\\
&  \quad\cdot\left[  \boldsymbol{I}-\frac{1}{N-1}A_{i}^{N}\left(
\boldsymbol{I}+\frac{A_{i}^{N\mathrm{T}}\boldsymbol{R}_{i}^{-1}A_{i}^{N}}%
{N-1}\right)  ^{-1}A_{i}^{N\mathrm{T}}\boldsymbol{R}_{i}^{-1}\right]  .
\label{eq:inverse-eff}%
\end{align}
Using (\ref{eq:inverse-eff}) requires only the solution of systems with the
data error covariance matrix $\boldsymbol{R}_{i}$ (which is typically easy,
and often $\boldsymbol{R}_{i}$ is diagonal) and of a system of the size $N$,
the number of ensemble members. See \cite{Mandel-2009-DAW} for details and
operation counts.\textbf{ }
\end{remark}

The EnKS is obtained by applying the same analysis step (\ref{eq:analysis}) as
in the EnKF to the composite state $Z_{0:i|i-1}$from time $0$ to $i$,
conditioned on data up to time $i-1,$%
\[
Z_{0:i|i-1}^{N}=\left[
\begin{array}
[c]{c}%
Z_{0|i-1}^{N}\\
\vdots\\
Z_{i|i-1}^{N}%
\end{array}
\right]  .
\]
in the place of $Z_{i|i-1}$. The observation term $\boldsymbol{H}_{i}%
Z_{i|i-1}^{N}-D_{i}$ becomes
\begin{equation}
\left[  0,\ldots,\boldsymbol{H}_{i}\right]  Z_{0:i|i-1}^{N}-D_{i}%
=\boldsymbol{H}_{i}Z_{i|i-1}^{N}-D_{i}. \label{eq:obs-comp}%
\end{equation}

\begin{algorithm}
[EnKS]\label{alg:EnKS} Given $z_{\mathrm{b}}$,

1. Initialize\textbf{ }%
\begin{equation}
z_{0|0}^{\ell}\sim N\left(  z_{\mathrm{b}},\boldsymbol{B}\right)  ,\quad
\ell=1,\ldots,N. \label{eq:EnKS-init}%
\end{equation}

2. For $i=1,\ldots,k$, advance in time,%
\begin{equation}
z_{i|i-1}^{\ell}=\boldsymbol{M}_{i}z_{i-1|i-1}^{\ell}+m_{i}+v_{i}^{\ell},\quad
v_{i}^{\ell}\sim N\left(  0,\boldsymbol{Q}_{i}\right)  ,
\label{eq:EnKS-advance}%
\end{equation}
followed by the analysis step%
\begin{align}
Z_{0:i|i}^{N}=  &  Z_{0:i|i-1}^{N}-\boldsymbol{P}_{0:i,0:i}^{N}%
\widetilde{\boldsymbol{H}}_{0:i}^{\mathrm{T}}(\widetilde{\boldsymbol{H}}%
_{0:i}\boldsymbol{P}_{0:i,0:i}\widetilde{\boldsymbol{H}}_{0:i}^{\mathrm{T}%
}+\boldsymbol{R}_{i})^{-1}\label{eq:smoother-analysis}\\
&  \cdot(\widetilde{\boldsymbol{H}}_{0:i}Z_{i|i-1}^{N}-D_{i}),\quad D_{i}\sim
N\left(  d_{i},\boldsymbol{R}_{i}\right)  ,\nonumber
\end{align}
where $\widetilde{\boldsymbol{H}}_{0:i}=\left[  0,\ldots,\boldsymbol{H}%
_{i}\right]  $, and $\boldsymbol{P}_{0:i,0:i}^{N}$ is the sample covariance
matrix of $Z_{0:i|i-1}^{N}$.
\end{algorithm}

The following theorem allows a straightforward implementation of the EnKS from
the EnKF -- the same transformation matrix is applied to the composite state
from times $0$ to $i$, not just the last time $i$. Also, one can use a
transformation matrix from another version of EnKF, such as the square root
filter, e.g., LETKF
\citep{Hunt-2007-EDA}%
; \cite{Fertig-2007-CS4} assume such relation a-priori for a related method
based on LETKF.

\begin{theorem}
\label{thm:T}The EnKS satisfies%
\begin{equation}
Z_{0:i|i}^{N}=Z_{0:i|i-1}^{N}\boldsymbol{T}_{i}^{N}. \label{eq:enks-trans}%
\end{equation}
where $\boldsymbol{T}_{i}^{N}$ is the transformation matrix
(\ref{eq:enkf-transf}) from the EnKF.
\end{theorem}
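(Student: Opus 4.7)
The plan is to expand the EnKS analysis step (\ref{eq:smoother-analysis}) in factored form and recognize that the only place where the composite index $0{:}i$ plays a role is as the \emph{left} ensemble factor, while everything sandwiched between $\widetilde{\boldsymbol{H}}_{0:i}$'s collapses to the EnKF quantities at time $i$.

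First I would write the composite sample covariance, analogously to (\ref{eq:cov-fact}), as
\begin{equation*}
\boldsymbol{P}_{0:i,0:i}^{N}=Z_{0:i|i-1}^{N}\frac{1}{N-1}\left(\boldsymbol{I}-\frac{\boldsymbol{11}^{\mathrm{T}}}{N}\right)\left(\boldsymbol{I}-\frac{\boldsymbol{11}^{\mathrm{T}}}{N}\right)Z_{0:i|i-1}^{N\mathrm{T}}.
\end{equation*}
Next, because $\widetilde{\boldsymbol{H}}_{0:i}=[0,\ldots,0,\boldsymbol{H}_{i}]$ selects only the last block, I have $\widetilde{\boldsymbol{H}}_{0:i}Z_{0:i|i-1}^{N}=\boldsymbol{H}_{i}Z_{i|i-1}^{N}$, which immediately gives, using (\ref{eq:obs-ens-member}),
\begin{equation*}
\widetilde{\boldsymbol{H}}_{0:i}\boldsymbol{P}_{0:i,0:i}^{N}\widetilde{\boldsymbol{H}}_{0:i}^{\mathrm{T}}=\frac{1}{N-1}A_{i}^{N}A_{i}^{N\mathrm{T}},
\end{equation*}
exactly the EnKF quantity (\ref{eq:obs-cov-fact}). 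Similarly, the cross-covariance factors as
\begin{equation*}
\boldsymbol{P}_{0:i,0:i}^{N}\widetilde{\boldsymbol{H}}_{0:i}^{\mathrm{T}}=Z_{0:i|i-1}^{N}\,\frac{1}{N-1}\left(\boldsymbol{I}-\frac{\boldsymbol{11}^{\mathrm{T}}}{N}\right)A_{i}^{N\mathrm{T}},
\end{equation*}
pulling $Z_{0:i|i-1}^{N}$ out on the left.

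Substituting these factorizations into (\ref{eq:smoother-analysis}) and identifying the sampled $D_{i}$ with the column stack $[d_{i}+w_{i}^{\ell}]_{\ell=1}^{N}$ from (\ref{eq:analysis}), the update becomes
\begin{equation*}
Z_{0:i|i}^{N}=Z_{0:i|i-1}^{N}\Bigl[\boldsymbol{I}-\tfrac{1}{N-1}\bigl(\boldsymbol{I}-\tfrac{\boldsymbol{11}^{\mathrm{T}}}{N}\bigr)A_{i}^{N\mathrm{T}}\bigl(\tfrac{1}{N-1}A_{i}^{N}A_{i}^{N\mathrm{T}}+\boldsymbol{R}_{i}\bigr)^{-1}[\boldsymbol{H}_{i}z_{i|i-1}^{\ell}-d_{i}-w_{i}^{\ell}]_{\ell=1,N}\Bigr],
\end{equation*}
and the bracketed factor is precisely $\boldsymbol{T}_{i}^{N}$ from (\ref{eq:EnKF-T})--(\ref{eq:incremental-Z-fd}), giving (\ref{eq:enks-trans}).

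There is no genuine obstacle: the proof is a direct algebraic verification that uses only the sparsity pattern of $\widetilde{\boldsymbol{H}}_{0:i}$. The one point requiring care is making sure that the sign conventions on the data perturbations in (\ref{eq:analysis}) versus (\ref{eq:smoother-analysis}) are compatible (the $-w_{i}^{\ell}$ in (\ref{eq:analysis}) versus $D_{i}\sim N(d_{i},\boldsymbol{R}_{i})$ in (\ref{eq:smoother-analysis})), so that the rightmost factor matches (\ref{eq:incremental-Z-fd}) exactly; once the same draw of perturbations is used in EnKF and EnKS, everything aligns.
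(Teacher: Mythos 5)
Your proof is correct and follows essentially the same route as the paper: both arguments exploit the sparsity of $\widetilde{\boldsymbol{H}}_{0:i}=[0,\ldots,0,\boldsymbol{H}_{i}]$ to collapse the composite-covariance terms to the EnKF quantities at time $i$, factor the cross-covariance so that $Z_{0:i|i-1}^{N}$ is pulled out on the left (using the idempotence of $\boldsymbol{I}-\boldsymbol{11}^{\mathrm{T}}/N$), and then identify the remaining factor with $\boldsymbol{T}_{i}^{N}$ from (\ref{eq:EnKF-T})--(\ref{eq:incremental-Z-fd}). Your explicit remark about reconciling the perturbation conventions between (\ref{eq:analysis}) and (\ref{eq:smoother-analysis}) is a fair point of care but does not change the argument.
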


\begin{proof}
We have
\[
\boldsymbol{P}_{0:i,0:i}^{N}=\left[
\begin{array}
[c]{lll}%
\boldsymbol{P}_{0,0}^{N} & \cdots & \boldsymbol{P}_{0,i}^{N}\\
\vdots & \ddots & \vdots\\
\boldsymbol{P}_{i,0}^{N} & \cdots & \boldsymbol{P}_{i,i}^{N}%
\end{array}
\right]  ,
\]
with the blocks
\[
\boldsymbol{P}_{j\ell}^{N}=\frac{1}{N-1}(Z_{j|i-1}^{N}-\overline{z}%
_{j|i-1}\boldsymbol{1}^{\mathrm{T}})(Z_{\ell|i-1}^{N}-\overline{z}_{\ell
|i-1}\boldsymbol{1}^{\mathrm{T}})^{\mathrm{T}}.
\]
The terms in (\ref{eq:analysis}) become in (\ref{eq:smoother-analysis})%
\[
\boldsymbol{P}_{0:i,0:i}^{N}\widetilde{\boldsymbol{H}}^{\mathrm{T}%
}=\boldsymbol{P}_{0:i,0:i}^{N}\left[
\begin{array}
[c]{c}%
0\\
\vdots\\
\boldsymbol{H}_{i}^{\mathrm{T}}%
\end{array}
\right]  =\boldsymbol{P}_{0:i,i}^{N}\boldsymbol{H}_{i}^{\mathrm{T}},
\]%
\[
\widetilde{\boldsymbol{H}}^{\mathrm{T}}\boldsymbol{P}_{0:i,0:i}^{N}%
\widetilde{\boldsymbol{H}}=\boldsymbol{H}_{i}\boldsymbol{P}_{i,i}%
^{N}\boldsymbol{H}_{i}^{\mathrm{T}},
\]
and, similarly as in (\ref{eq:cov-fact}),%
\begin{equation}
\boldsymbol{P}_{0:i,i}^{N}=Z_{0:i|i-1}^{N}\frac{1}{N-1}\left(  \boldsymbol{I}%
-\frac{\boldsymbol{11}^{\mathrm{T}}}{N}\right)  \left(  \boldsymbol{I}%
-\frac{\boldsymbol{11}^{\mathrm{T}}}{N}\right)  Z_{i|i-1}^{N\mathrm{T}}.
\label{eq:cov-comp}%
\end{equation}

The result now follows by the comparison of (\ref{eq:obs-comp}%
)--(\ref{eq:cov-comp}) with (\ref{eq:analysis})--(\ref{eq:enkf-transf}).
\end{proof}

When the original, nonlinear operators instead of the linearizations are used,
we get the nonlinear EnKS method, which is common and useful in practice, even
if it may not be justified theoretically. This method is obtained from the
linear EnKS by replacing (\ref{eq:EnKS-advance}) and (\ref{eq:obs-ens-member})
by their original, nonlinear versions. It operates on the original ensemble of
the states $X^{N}=\left[  x^{\ell}\right]  _{\ell=1}^{N}$ rather than on the
increments $z^{\ell}=\delta x^{\ell}$.

\begin{algorithm}
[Nonlinear EnKS]\label{alg:nonlinear-EnKS}

1. Initialize\textbf{ }%
\[
x_{0|0}^{k}\sim N\left(  x_{\mathrm{b}},\boldsymbol{B}\right)  .
\]

2. For $i=1,\ldots,k$, advance in time%
\begin{equation}
x_{i|i-1}^{\ell}=\mathcal{M}_{i}\left(  x_{i-1|i-1}^{\ell}\right)
+v_{i},\quad v_{i}\sim N\left(  0,\boldsymbol{Q}_{i}\right)
\label{eq:nonlinear-EnKS-advance}%
\end{equation}
followed by the analysis step%
\[
X_{0:i|i}^{N}=X_{0:i|i-1}^{N}\boldsymbol{T}_{i}^{N},\quad\boldsymbol{T}%
_{i}^{N}\in\mathbb{R}^{N\times N},
\]
where%
\begin{align}
\boldsymbol{T}_{i}^{N}  &  =I-\frac{1}{N-1}\left(  \boldsymbol{I}%
-\frac{\boldsymbol{11}^{\mathrm{T}}}{N}\right)  A_{i}^{N\mathrm{T}}\nonumber\\
&  \cdot\left(  \frac{1}{N-1}A_{i}^{N}A_{i}^{N\mathrm{T}}+\boldsymbol{R}%
_{i}\right)  ^{-1}\nonumber\\
&  \cdot\left[  \mathcal{H}_{i}\left(  x_{i|i-1}^{\ell}\right)  -y_{i}%
-w_{i}^{\ell}\right]  _{\ell=1,N},\quad w_{i}^{\ell}\sim N\left(
0,\boldsymbol{R}_{i}\right)  ,\label{eq:eval-innovation}\\
A_{i}^{N}  &  =\left[  a_{i}^{1},\ldots,a_{i}^{N}\right]  ,\nonumber\\
a_{i}^{\ell}  &  =\mathcal{H}_{i}\left(  x_{i|i-1}^{\ell}\right)  -\frac{1}{N}%
{\displaystyle\sum\limits_{j=1}^{N}}
\mathcal{H}_{i}\left(  x_{i|i-1}^{j}\right)  , \label{eq:eval-Z-from-h}%
\end{align}
and $\overline{x}_{i|i-1}=X_{i|i-1}^{N}\boldsymbol{1}/N.$
\end{algorithm}

Using Theorem \ref{thm:T}, it is easy to see that Algorithm
\ref{alg:nonlinear-EnKS} coincides with Algorithm \ref{alg:EnKS} in the linear
case, i.e., when $\mathcal{M}_{i}$ and $\mathcal{H}_{i}$ are affine operators.

\section{Nonlinear EnKS-4DVAR method}

\label{sec:FD}

So far, the algorithm was relying on the linearized (i.e., tangent) model
operators $\boldsymbol{M}_{i}$ and $\boldsymbol{H}_{i}$ and their adjoints.

The linearized model $\boldsymbol{M}_{i}=\mathcal{M}_{i}^{\prime}\left(
x_{i-1}\right)  $ occurs only in advancing the time as action on the ensemble
members $\delta x^{\ell}=z^{\ell}$,%
\[
\boldsymbol{M}_{i}z_{i-1}^{\ell}+m_{i}=\mathcal{M}_{i}^{\prime}\left(
x_{i-1}\right)  z_{i-1}^{\ell}+\mathcal{M}_{i}\left(  x_{i-1}\right)  -x_{i}%
\]
Approximating by finite differences based at $x_{i-1}$ with step $\tau>0$, we
get%
\begin{align}
\boldsymbol{M}_{i}z_{i-1}^{\ell}+m_{i}  &  \approx\frac{\mathcal{M}_{i}\left(
x_{i-1}+\tau z_{i-1}^{\ell}\right)  -\mathcal{M}_{i}\left(  x_{i-1}\right)
}{\tau}\label{eq:tangent-M}\\
&  +\mathcal{M}_{i}\left(  x_{i-1}\right)  -x_{i}.\nonumber
\end{align}
Thus, advancing the linarized model in time requires $N+1$ evaluations of
$\mathcal{M}_{i}$, at $x_{i-1}$ and $x_{i-1}+\tau\delta x_{i-1}^{n}$.

The observation matrix $\boldsymbol{H}_{i}$ occurs only in the action on the
ensemble,%
\[
\boldsymbol{H}_{i}Z^{N}=\left[  \boldsymbol{H}_{i}z_{i}^{1},\ldots
,\boldsymbol{H}_{i}z_{i}^{N}\right]  .
\]
Approximating by finite differences based at $x_{i}$, with step $\tau>0$, we
have%
\begin{equation}
\boldsymbol{H}z_{i}^{\ell}\approx\frac{\mathcal{H}_{i}\left(  x_{i-1}+\tau
z_{i}^{\ell}\right)  -\mathcal{H}_{i}\left(  x_{i-1}\right)  }{\tau}.
\label{eq:tangent-H}%
\end{equation}
Thus, evaluating the action of the linarized observation requires $N+1$
evaluations of $\mathcal{H}_{i}$, at $x_{i-1}$ and $x_{i-1}+\tau z_{i-1}%
^{\ell}$.

Here is the overall method. First, initialize%
\[
x_{0}=x_{\mathrm{b}},\quad x_{i}=M_{i}\left(  x_{i-1}\right)  ,i=1,\ldots,k,
\]
if not given already. One iteration (\ref{eq:4dvar-incr}) of the incremental
4DVAR is then implemented as follows.

\begin{algorithm}
[EnKS-4DVAR]\label{alg:EnKS-4DVAR}Given $x_{0},\ldots,x_{k}$:

1. Initialize $z_{0|0}^{\ell}\sim N\left(  z_{\mathrm{b}},\boldsymbol{B}%
\right)  $ following (\ref{eq:enks-init}), with $z_{\mathrm{b}}=0$.

2. For $i=1,\ldots,k$, advance $z^{\ell}$in time following
(\ref{eq:EnKS-advance}) with the linearized operator approximated from
(\ref{eq:tangent-M}),%
\begin{align}
z_{i|i-1}^{\ell}=  &  \frac{\mathcal{M}_{i}\left(  x_{i-1}+\tau z_{i-1|i-1}%
^{\ell}\right)  -\mathcal{M}_{i}\left(  x_{i-1}\right)  }{\tau}%
\label{eq:incremental-smoothing-adv-fd}\\
&  +\mathcal{M}_{i}\left(  x_{i-1}\right)  -x_{i}+v_{i}^{\ell},\quad
v_{i}^{\ell}\sim N\left(  0,\boldsymbol{Q}_{i}\right)  ,\nonumber
\end{align}
followed by the analysis step (\ref{eq:enks-trans}) with the transformation
matrix $\boldsymbol{T}_{i}^{N}$ computed from (\ref{eq:EnKF-T}) and with the
matrix-vector products $\boldsymbol{H}_{i}z_{i}$ approximated from
(\ref{eq:tangent-H}).

3. Update
\[
x_{i}\longleftarrow x_{i}+\frac{1}{N}%
{\displaystyle\sum\limits_{\ell=1}^{N}}
z_{i|k}^{\ell},\quad i=0,\ldots,k.
\]

\end{algorithm}

Note that for small $\tau$, the resulting method is asymptotically equivalent
to the method with the derivatives. Amazingly, it turns out that in the case
when $\tau=1$, we recover the standard EnKS applied directly to the nonlinear
problems, that is, with the linearized advance in time (\ref{eq:advance})
replaced by application of the original, nonlinear operator $\mathcal{M}_{i}$.
In particular, the incremental 4DVAR does not converge unless it is already at
a stationary point, because each iteration delivers the same result, up to the
randomness of the EnKS.

\begin{theorem}
\label{thm:tau1}If $\tau=1$, then one step of EnKS-4DVAR (Algorithm
\ref{alg:EnKS-4DVAR}) is exactly the nonlinear EnKS (Algorithm
\ref{alg:nonlinear-EnKS}). In particular, the result of the step does not
depend on the previous iterate.
\end{theorem}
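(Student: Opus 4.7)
The plan is to set up a pointwise correspondence between the EnKS-4DVAR ensemble of increments $z_{i|j}^{\ell}$ and a nonlinear EnKS ensemble of states $x_{i|j}^{\ell}$ via the additive shift
\[
x_{i|j}^{\ell} \;=\; x_{i} + z_{i|j}^{\ell},
\]
where $x_{0},\ldots,x_{k}$ are the fixed linearization iterates used inside the step. Then I would verify that each stage of Algorithm~\ref{alg:EnKS-4DVAR} at $\tau=1$ translates exactly into the corresponding stage of Algorithm~\ref{alg:nonlinear-EnKS} under this shift.

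The initialization is immediate: since $z_{\mathrm{b}}=0$ and $x_{0}=x_{\mathrm{b}}$, the samples $z_{0|0}^{\ell}\sim N(0,\boldsymbol{B})$ give $x_{0|0}^{\ell}=x_{0}+z_{0|0}^{\ell}\sim N(x_{\mathrm{b}},\boldsymbol{B})$, matching the initialization of the nonlinear EnKS. The advance step at $\tau=1$ collapses (\ref{eq:incremental-smoothing-adv-fd}) to
\[
z_{i|i-1}^{\ell} \;=\; \mathcal{M}_{i}\left(x_{i-1}+z_{i-1|i-1}^{\ell}\right) - x_{i} + v_{i}^{\ell},
\]
because the two copies of $\mathcal{M}_{i}(x_{i-1})$ cancel. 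Adding $x_{i}$ to both sides converts this into $x_{i|i-1}^{\ell}=\mathcal{M}_{i}(x_{i-1|i-1}^{\ell})+v_{i}^{\ell}$, which is (\ref{eq:nonlinear-EnKS-advance}).

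For the analysis step I would show that the transformation matrix $\boldsymbol{T}_{i}^{N}$ in EnKS-4DVAR coincides with the one built in (\ref{eq:eval-innovation})--(\ref{eq:eval-Z-from-h}). At $\tau=1$, formula (\ref{eq:tangent-H}) reads $\boldsymbol{H}_{i}z_{i|i-1}^{\ell}\approx \mathcal{H}_{i}(x_{i}+z_{i|i-1}^{\ell})-\mathcal{H}_{i}(x_{i})$; the constant $\mathcal{H}_{i}(x_{i})$ is killed by the centering factor $(\boldsymbol{I}-\boldsymbol{11}^{\mathrm{T}}/N)$, so $A_{i}^{N}$ is identical in both algorithms, and the innovation $\boldsymbol{H}_{i}z_{i|i-1}^{\ell}-d_{i}$ reduces via $d_{i}=y_{i}-\mathcal{H}_{i}(x_{i})$ to $\mathcal{H}_{i}(x_{i|i-1}^{\ell})-y_{i}$, matching (\ref{eq:eval-innovation}). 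Hence $\boldsymbol{T}_{i}^{N}$ is literally the same matrix in the two algorithms. What remains is to propagate the shift through the composite analysis $Z_{0:i|i}^{N}=Z_{0:i|i-1}^{N}\boldsymbol{T}_{i}^{N}$: since $X_{0:i|i-1}^{N}=Z_{0:i|i-1}^{N}+x_{0:i}\boldsymbol{1}^{\mathrm{T}}$, we need $x_{0:i}\boldsymbol{1}^{\mathrm{T}}\boldsymbol{T}_{i}^{N}=x_{0:i}\boldsymbol{1}^{\mathrm{T}}$, equivalently $\boldsymbol{1}^{\mathrm{T}}\boldsymbol{T}_{i}^{N}=\boldsymbol{1}^{\mathrm{T}}$.

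This identity $\boldsymbol{1}^{\mathrm{T}}\boldsymbol{T}_{i}^{N}=\boldsymbol{1}^{\mathrm{T}}$ is the one delicate piece, and I expect it to be the main obstacle in making the translation rigorous. It follows from (\ref{eq:EnKF-T}) on observing that $\boldsymbol{1}^{\mathrm{T}}(\boldsymbol{I}-\boldsymbol{11}^{\mathrm{T}}/N)=0$, so that left-multiplying $\boldsymbol{T}_{i}^{N}$ by $\boldsymbol{1}^{\mathrm{T}}$ annihilates the entire correction term and leaves $\boldsymbol{1}^{\mathrm{T}}$. With this in hand, $X_{0:i|i}^{N}=X_{0:i|i-1}^{N}\boldsymbol{T}_{i}^{N}$, and the final step $x_{i}\leftarrow x_{i}+\frac{1}{N}\sum_{\ell} z_{i|k}^{\ell}=\frac{1}{N}\sum_{\ell} x_{i|k}^{\ell}$ is just the ensemble mean produced by the nonlinear EnKS. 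Because that mean depends only on $x_{\mathrm{b}}$, the observations $y_{i}$, and the sampled noise, and not on the linearization points $x_{0:k}$ at which the finite differences were taken, the ``independence of the previous iterate'' conclusion follows.
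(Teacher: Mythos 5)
Your proof is correct and follows essentially the same route as the paper: the identification $x_{i|j}^{\ell}=x_{i}+z_{i|j}^{\ell}$ together with a stage-by-stage check that the $\tau=1$ finite differences collapse to the nonlinear operators in (\ref{eq:nonlinear-EnKS-advance}), (\ref{eq:eval-innovation}) and (\ref{eq:eval-Z-from-h}). The one ingredient you supply beyond the paper's own argument is the explicit verification that $\boldsymbol{1}^{\mathrm{T}}\boldsymbol{T}_{i}^{N}=\boldsymbol{1}^{\mathrm{T}}$, which is genuinely needed to carry the shift $X_{0:i|i-1}^{N}=Z_{0:i|i-1}^{N}+x_{0:i}\boldsymbol{1}^{\mathrm{T}}$ through the analysis step and which the paper leaves implicit; your one-line justification via $\boldsymbol{1}^{\mathrm{T}}\left(\boldsymbol{I}-\boldsymbol{11}^{\mathrm{T}}/N\right)=0$ applied to (\ref{eq:EnKF-T}) is correct.
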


\begin{proof}
Indeed, (\ref{eq:incremental-smoothing-adv-fd}) becomes%
\begin{align*}
z_{i|i-1}^{\ell}  &  =\frac{\mathcal{M}_{i}\left(  x_{i-1}^{{}}+z_{i-1|i-1}%
^{\ell}\right)  -\mathcal{M}_{i}\left(  x_{i-1}^{{}}\right)  }{1}\\
&  +\mathcal{M}_{i}\left(  x_{i-1}^{{}}\right)  -x_{i}^{{}}+v_{i}^{\ell}\\
&  =\mathcal{M}_{i}\left(  x_{i-1}+z_{i-1|i-1}^{\ell}\right)  -x_{i}^{{}%
}+v_{i}^{\ell},
\end{align*}
hence%
\begin{equation}
x_{i}+z_{i|i-1}^{\ell}=\mathcal{M}_{i}\left(  x_{i-1}+z_{i-1|i-1}^{\ell
}\right)  +v_{i}^{\ell} \label{eq:nonlinear-adv}%
\end{equation}
which is exactly the same as advancing the ensemble member $\ell$ following
(\ref{eq:nonlinear-EnKS-advance}) with $x_{i-1|i-1}^{\ell}=x_{i-1}%
+z_{i-1|i-1}^{\ell}$. Similarly, (\ref{eq:obs-ens-member}) becomes with
$\tau=1$
\begin{align}
a_{i}^{\ell}=  &  \frac{\mathcal{H}_{i}\left(  x_{i}+z_{i|i-1}^{\ell}\right)
-\mathcal{H}_{i}\left(  x_{i}\right)  }{1}\nonumber\\
&  -\frac{1}{N}\sum_{j=1}^{N}\frac{\mathcal{H}_{i}\left(  x_{i}+z_{i|i-1}%
^{j}\right)  -\mathcal{H}_{i}\left(  x_{i}\right)  }{1}\\
=  &  \mathcal{H}_{i}\left(  x_{i}+z_{i|i-1}^{\ell}\right)  -\frac{1}{N}%
\sum_{j=1}^{N}\mathcal{H}_{i}\left(  x_{i}+z_{i|i-1}^{j}\right)  ,
\label{eq:nonlinear-Z}%
\end{align}
which is exactly the same as (\ref{eq:eval-Z-from-h}) with $x_{i|i-1}^{\ell
}=x_{i}+z_{i|i-1}^{\ell}$. Finally, (\ref{eq:incremental-Z-fd}) becomes using
(\ref{eq:def-di}),%
\begin{align}
&  \boldsymbol{H}_{i}z_{i|i-1}^{\ell}-d_{i}\nonumber\\
&  =\frac{\mathcal{H}_{i}\left(  x_{i}+z_{i|i-1}^{\ell}\right)  -\mathcal{H}%
_{i}\left(  x_{i}\right)  }{1}-\left[  y_{i}-\mathcal{H}_{i}\left(
x_{i}\right)  \right] \\
&  =\mathcal{H}_{i}\left(  x_{i|i-1}^{\ell}\right)  -y_{i}
\label{eq:nonlinear-D}%
\end{align}
which is exactly the same as in (\ref{eq:eval-innovation}).
\end{proof}

\section{Tikhonov regularization and the Levenberg-Marquardt method}

\label{sec:reg}

The Gauss-Newton method may diverge, but convergence to a stationary point of
(\ref{eq:4dvar}) can be recovered by a control of the step $\delta x$. Adding
a constraint of the form $\left\Vert \delta x_{i}\right\Vert \leq\varepsilon$
leads to globally convergent trust region methods
\citep{Gratton-2013-PGC}%
. Here, we add $\delta x_{i}$ in a Tikhonov regularization term of the form
$\gamma\left\Vert \delta x_{i}\right\Vert _{\boldsymbol{S}_{i}^{-1}}^{2}$,
which controls the step size as well as rotates the step direction towards the
steepest descent, and obtain the Levenberg-Marquardt method $x_{0:k}\leftarrow
x_{0:k}+\delta x_{0:k}$, where%
\begin{align}
&  \left\Vert \delta x_{0}-z_{\mathrm{b}}\right\Vert _{\boldsymbol{B}^{-1}%
}^{2}+%
{\displaystyle\sum\limits_{i=1}^{k}}
\left\Vert \delta x_{i}-\boldsymbol{M}_{i}\delta x_{i-1}-m_{i}\right\Vert
_{\boldsymbol{Q}_{i}^{-1}}^{2}\nonumber\\
&  \quad+%
{\displaystyle\sum\limits_{i=1}^{k}}
\left\Vert d_{i}-\boldsymbol{H}_{i}\delta x_{i}\right\Vert _{\boldsymbol{R}%
_{i}^{-1}}^{2}\nonumber\\
&  \quad+\gamma%
{\displaystyle\sum\limits_{i=0}^{k}}
\left\Vert \delta x_{i}\right\Vert _{\boldsymbol{S}_{i}^{-1}}^{2}%
\rightarrow\min_{\delta x_{0:k}} \label{eq:LM1}%
\end{align}

Under suitable technical assumptions, the Levenberg-Marquardt method is
guaranteed to converge globally if the regularization parameter $\gamma\geq0$
is large enough
\citep{Gill-1978-ASN,Osborne-1976-NLL}%
. Estimates for the convergence of the Levenberg-Marquardt method in the case
when the linear system is solved only approximately exist
\citep{Wright-1985-ILM}%
.

Similarly as in \cite{Johns-2008-TEK}, we interpret the regularization terms
$\gamma\left\Vert \delta x_{i}\right\Vert _{\boldsymbol{S}_{i}^{-1}}^{2}$ in
(\ref{eq:LM1}) as arising from additional independent observations $\delta
x_{i}\sim N\left(  0,\gamma^{-1}\boldsymbol{S}_{i}\right)  $ Because the
additional regularization observations $\delta x_{i}\approx0$ are independent
of the other observations and the state, separately, resulting in the
mathematically equivalent but often more efficient two-stage method - simply
run the EnKF analysis (\ref{eq:analysis})\ twice, and apply both
transformation matrices in turn following (\ref{eq:enks-trans}). With the
choice of $\boldsymbol{S}_{i}$ as identity or, more generally a diagonal
matrix, the implementation following (\ref{eq:inverse-eff}) is efficient; see
\cite{Mandel-2009-DAW} for operation counts.

Note that unlike in \cite{Johns-2008-TEK}, where the regularization was
applied to a nonlinear problem and thus the sequential data assimilation was
only approximate, here the EnKS is run on the auxiliar linearized problem
(\ref{eq:LM1}), so all distributions are gaussian and the equivalence of
solving (\ref{eq:LM1}) at once and assimilating the observations sequentially
is statistically exact.

\section{Computational results}

\label{sec:comp}

\subsection{Lorenz 63 model}

We first show an example without model error, where convergence is achieved
with $\gamma=0$(algo \ref{alg:EnKS-4DVAR}).

We consider the Lorenz 64 equations
\citep{Lorenz-1963-DNF-x}%
, a simple dynamical model with chaotic behaviour. The Lorenz equations are
given by the nonlinear system
\begin{align*}
\frac{dx}{dt}  &  =-\sigma(x-y)\\
\frac{dy}{dt}  &  =\rho x-y-xz\\
\frac{dz}{dt}  &  =xy-\beta z
\end{align*}
where $x=x(t)$, $y=y(t)$, $z=z(t)$ and $\sigma$, $\rho$, $\beta$ are
parameters, which in these experiments are chosen to have the values 10, 28
and 8/3 respectively. The system is discretized using the fourth-order
Runge-Kutta method. In (\ref{eq:4dvar}), we choose%
\[
\boldsymbol{B=}\sigma_{b}^{2}\operatorname{diag}\left(  1,\frac{1}{4},\frac
{1}{9}\right)  ,\quad R_{i}=\sigma_{r}^{2}\boldsymbol{I},
\]%
\[
\mathcal{H}_{i}\left(  x,y,z\right)  =\left(  x^{2},y^{2},z^{2}\right)  .
\]
In the experiments below, we assume perfect model, therefore $\boldsymbol{Q}%
_{i}=\varepsilon\boldsymbol{I}$, $\varepsilon\approx0$. We put $\sigma
_{b}=1,\sigma_{r}=1$, and $\epsilon=0.0001$.

\begin{figure}[ptb]
\begin{center}
\includegraphics[width=3.0in]{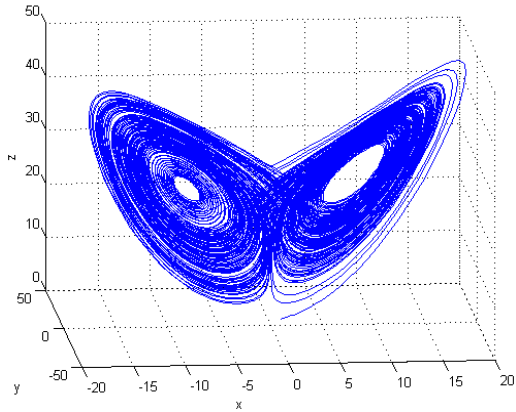}
\end{center}
\caption{The Lorenz attractor, initial values $x(0)=1$, $y(0)=1$, and
$z(0)=1$, time step $dt = 0.1$.}%
\label{fig:lorenz_63}%
\end{figure}

\begin{figure}[ptb]
\begin{center}
\hspace*{-0.4in} \includegraphics[width=3.5in]{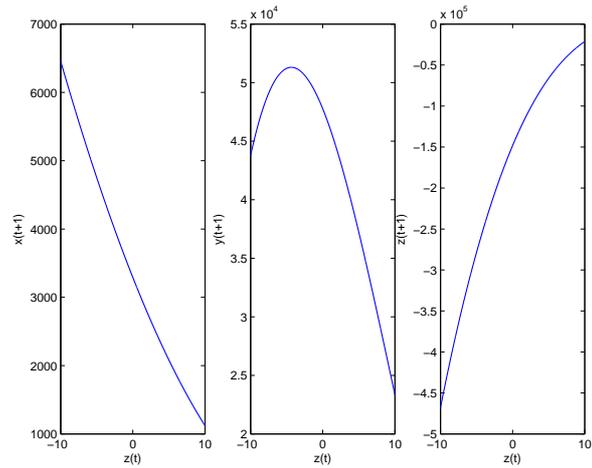}
\end{center}
\par
\vspace*{-0.2in}\caption{Nonlinearity of the Lorenz 63 model. The values of
$x(t+1), y(t+1)$ and $z(t+1)$ change quickly as a function of $x(t)=1$,
$y(t)=1$, and varying $z(t)$.}%
\label{fig:zperturbation}%
\end{figure}As can be seen in Figures \ref{fig:lorenz_63} and
\ref{fig:zperturbation}, the Lorenz attractor is fully nonlinear. It has two
lobes connected near the origin, and the trajectories of the system in this
saddle region are particularly sensitive to perturbations. Hence, slight
perturbations can alter the subsequent path from one lobe to the other. In
Figure \ref{fig:zperturbation}, we keep $x(t)$ and $y(t)$ constant and we vary
just $z(t)$, then we compute the state at time $t+1$, this figure shows the
non linear dependence between the different components of the state at time
$t+1$ and the third component of the state at time $t$,

\begin{table}[ptb]
\begin{center}
{\small
\begin{tabular}
[c]{|c|c|c|c|c|c|c|}\hline
Iteration & 1 & 2 & 3 & 4 & 5 & 6\\\hline
RMSE & 20.16 & 15.37 & 3.73 & 2.53 & 0.09 & 0.09\\\hline
\end{tabular}
}
\end{center}
\caption{Norm of the root mean square error in Gauss-Newton iterations with
EnKS as linear solver.}%
\label{tab:rmse_10_iterations}%
\end{table}

\begin{figure}[ptb]
\begin{center}
\includegraphics[width=2.5in]{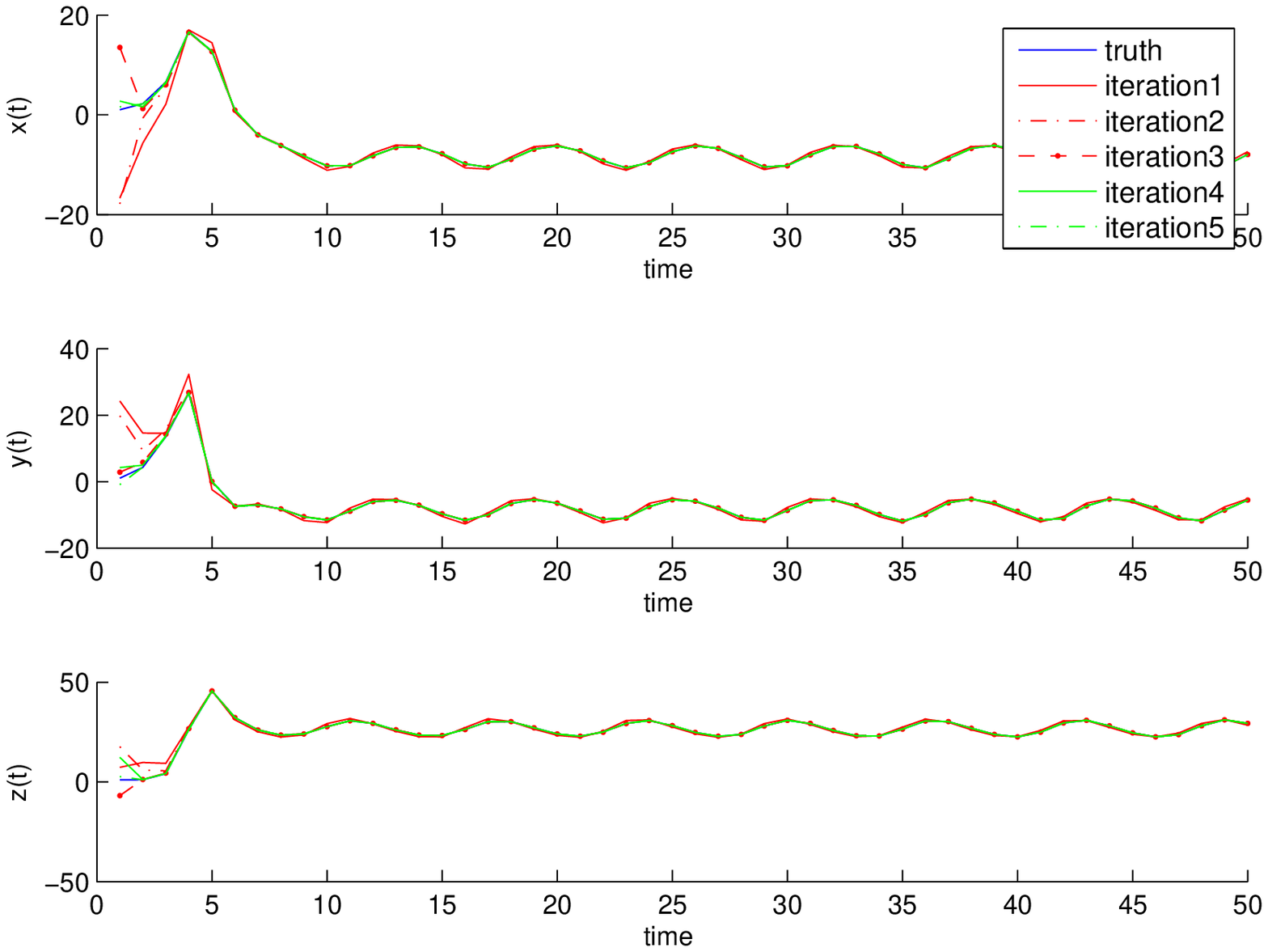}
\includegraphics[width=2.5in]{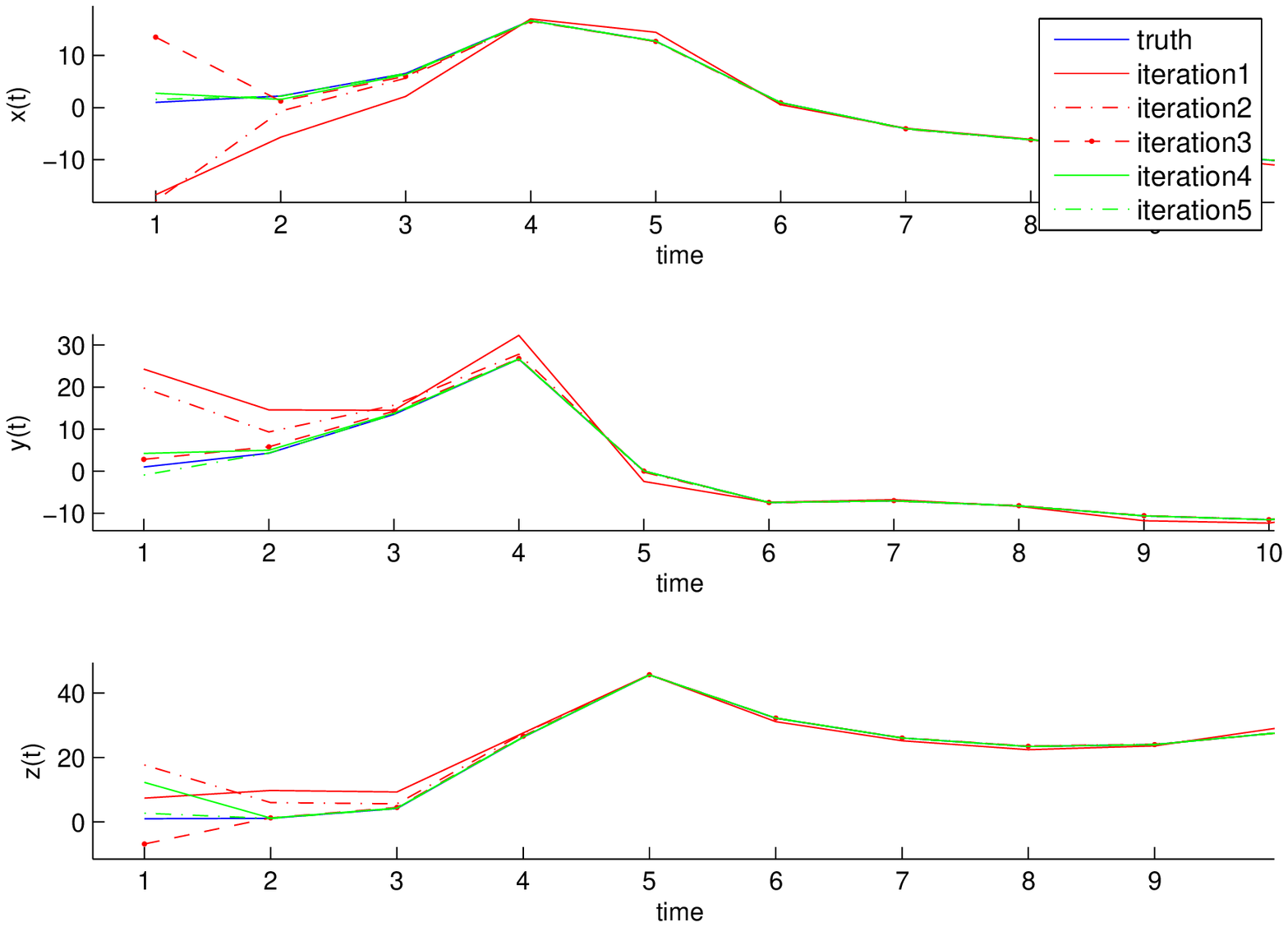}
\end{center}
\caption{The three components $x$, $y$, $z$ of the truth and five iterations
of EnKS-4DVAR. The initial conditions for the truth are $x(0)=1$, $y(0)=1$,
and $z(0)=1$, time step $dt = 0.1$, observations are the full state at each
time, ensemble size is $100$. }%
\label{fig:state_5_iterations}%
\end{figure}

\begin{figure}[ptb]
\begin{center}
\hspace*{0.4in}\includegraphics[width=2.2in]{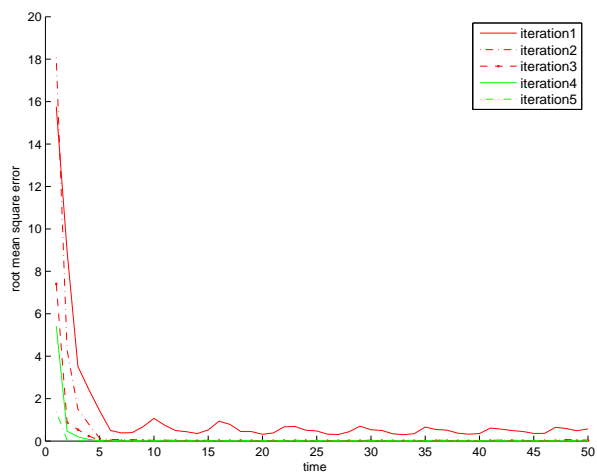}
\end{center}
\caption{Root mean square error of EnKS-4DVAR iterations. The problem setting
is the same as in Fig.~\ref{fig:state_5_iterations}.}%
\label{fig:rmse_5_iterations}%
\end{figure}

To evaluate the performance of the method, we use the twin experiment
technique. That is, an integration of the model is chosen as the true state.
We then obtain the data $y_{i}$ by applying the observation operator
$\mathcal{H}_{i}$ to the truth and then adding a gaussian perturbation
$N(0,\boldsymbol{R}_{i})$. Similarly, the background $x_{b}$ is sampled from
the gaussian distribution with the mean equal to the initial conditions and
the covariance $\boldsymbol{B}$. Then we try to recover the truth using the
observations $y_{i}$ and the background $x_{b}$.

Figure \ref{fig:state_5_iterations} reports simulation results for
assimilating observations over 50 assimilation cycles, using the Hybrid 4DVAR
and nonlinear EnKS method. Cycles are separated by a time interval of $dt =
0.1$. Figure \ref{fig:rmse_5_iterations} and shows the root mean square error
(RMSE) between the sample posterior mean and the true state of the system.

As can be seen from Table \ref{tab:rmse_10_iterations}, five iterations were
enough for the method to converge. Note that the error does not converge to
zero, because of the approximation and variability inherent in the ensemble approach.

\subsection{Lorenz 96 model}

We now consider the effect of the model error in the algorithm
\ref{alg:EnKS-4DVAR}.

\begin{figure}[ptb]
\begin{center}
\includegraphics[width=3.3in]{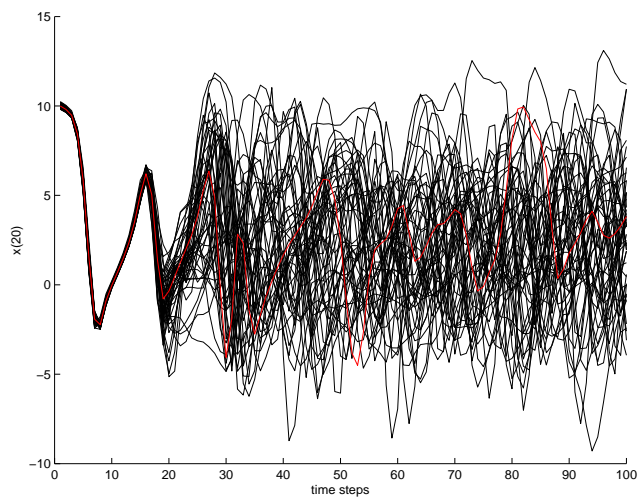}
\end{center}
\caption{Evolution of the 40-variable Lorenz-96 system at site 20. The red
line is unperturbed forecast. The black lines are an ensemble of 50 forecasts,
which start from slightly perturbed initial conditions.}%
\label{fig:lorenz96_perturbation}%
\end{figure}The Lorenz 96 model
\citep{Lorenz-2006-PPP}
is defined by the system of differential equations%
\[
\frac{dx_{j}}{dt}=\frac{1}{\kappa}(x_{j-1}(x_{j+1}-x_{j-2})-x_{j}+F),
\]
$j=1,\ldots,40$, with cyclic boundary conditions $x_{-1}=x_{39}$,
$x_{0}=x_{40}$, $x_{41}=x_{1}$. This model behaves chaotically in the case of
external forcing $F=8$. The first term of right-hand side simulates advection,
and this model can be regarded as the time evolution of a one-dimensional
quantity on a constant latitude circle, that is, the subscript corresponds to
longitude. The PDE is discretized using a fourth-order Runge-Kutta method. For
the results below $\kappa=1$, $F=8$ and $dt=0.01$. Figure
\ref{fig:lorenz96_perturbation} shows the chaotic dynamics of the Lorenz 96
system. For the tests we took the parameters $B=\sigma_{b}^{2}diag(1,...,\frac
{1}{i^{2}},...)$, $R_{i}=\sigma_{r}^{2}I$, $\mathcal{H}_{i}\left(
X_{i}\right)  =X_{t}+X_{t}^{2}$, $\mathcal{M}_{i}$ is the Lorenz 96 model,
$Q_{i}=\sigma_{q}C$, where $C_{i,j}=exp(-|i-j|\frac{dt}{L})$, $L=\frac{dt}{5}%
$, $\sigma_{b}=1,\sigma_{r}=1$, $\sigma_{q}=0.1$, and the ensemble size
$N=40$. We use again the twin experiments technique, the true state is equal
to an integration of the model plus a gaussian perturbation $N(0,Q_{i})$.
Figures \ref{fig:lorenz96_filter} and \ref{fig:rmse_lorenz96} illustrate
EnKS-4DVAR on this problem.

\begin{figure}[ptb]
\begin{center}
\includegraphics[width=3.3in]{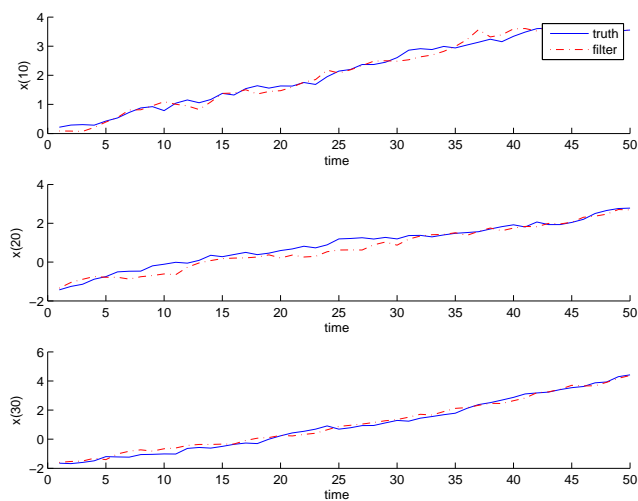}
\end{center}
\caption{EnKS-4DVAR for the Lorenz 96 system. The truth at $t=0$ is generated
randomly. The figure shows the truth and the filter by EnKS-4DVAR at
$x_{10}(t),$ $x_{20}(t),$ and $x_{30}(t)$. }%
\label{fig:lorenz96_filter}%
\end{figure}\begin{figure}[ptb]
\begin{center}
\includegraphics[width=3.3in]{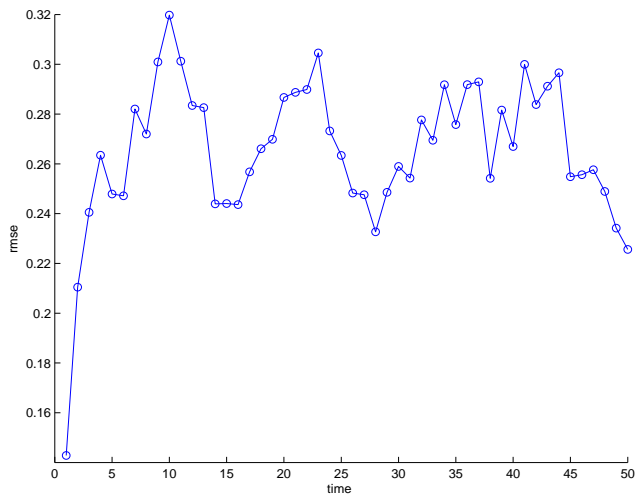}
\end{center}
\caption{Root mean square error between the truth and EnKS-4DVAR for the
Lorenz 96 system and the same setup as in Figure \ref{fig:lorenz96_filter}.}%
\label{fig:rmse_lorenz96}%
\end{figure}


\subsection{Example where Gauss-Newton does not converge}

We now show that the algorithm \ref{alg:EnKS-4DVAR} may not be convergent and
that Tikhonov regularization may be needed in some circumstances. The
following academic example illustrates this fact.

\begin{figure}[ptb]
\begin{center}
\includegraphics[width=3.35in]{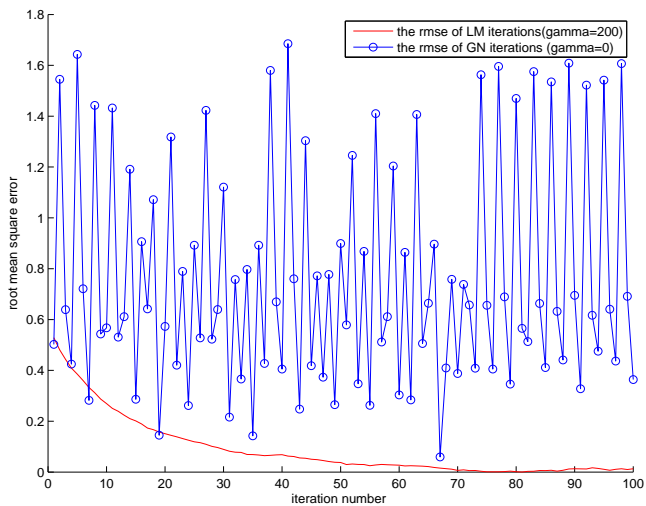}
\end{center}
\caption{The RMSE between Gauss-Newton iterations and local minimum
$(x_{0}^{\star},x_{1}^{\star})=(0.419,0.419)$($\gamma=0$, top), and the root
mean squared error between the Levenbert-Marquardt iteration and the local
minimum $(x_{0}^{\star},x_{1}^{\star})$($\gamma= 200$, bottom).}%
\label{fig:rmse_gn_LM}%
\end{figure}

The Gauss Newton method for nonlinear least squares is not globally
convergent, but convergence to a stationary point of any least square problem
can be recovered by using the Levenberg-Marquart control. Consider the
following example, which requires Levenberg-Marquart regularization to
converge. The objective function to minimize is
\begin{equation}
J(x_{0},x_{1})=(x_{0}-2)^{2}+(3+x_{1}^{3})^{2}+\frac{1}{q}(x_{0}-x_{1})^{2}
\label{eq:problem_rquiring_LM}%
\end{equation}
where $q$ is small, $q=0.000001$.

This problem could be seen as a 4DVAR problem where the state at time $0$ is
$x_{0}$, the background state is $x_{\mathrm{b}}=2$, the background covariance
$\boldsymbol{B}=\boldsymbol{I}$, there is one time step, the state at time
$t=1$ is $x_{1}$, the model $M_{1}=I$, and the model is perfect.
$Q_{1}=0.000001\approx0$, observation operator $\mathcal{H}_{1}(x)=-x^{3}$ and
observation error covariance is $\boldsymbol{R}_{1}=\boldsymbol{I}$.

Figure \ref{fig:rmse_gn_LM} shows the iterations of the EnKS-4DVAR method
applied to the problem \ref{eq:problem_rquiring_LM} seen as 4Dvar problem, in
two cases: when $\gamma=0$ ( Gauss Newton iteration) the method does not
converge, and for $\gamma=200$ ( Levenberg-Marquart iteration), the method
seems to converge to the local minimum $(x_{0}^{\star},x_{1}^{\star
})=(0.419,0.419)$.

\section{Conclusion}

\label{sec:conclusion}

The EnKS-4DVAR method was formulated and shown to be capable of handling
strongly nonlinear problems, and it converges to the nonlinear least squares
solution in a small number of iterations. Its performance on large and
realistic problems will be studied elsewhere.

\ack This research was supported by the Fondation STAE project ADTAO and the
National Science Foundation grants AGS-0835579 and DMS-1216481. A part of this
work was done when Jan Mandel was visiting INP-ENSEEIHT and CERFACS.

\bibliographystyle{wileyqj}
\bibliography{../../references/geo,../../references/other}

\end{document}